\newtheorem{fact}{Fact}
\newtheorem{theorem}{Theorem}
\newtheorem{lemma}{Lemma}
\newtheorem{corollary}{Corollary}
\newcommand{\mypar}[1]{ \left( #1 \right) }
\newcommand{\mymatrix}[2]{ \mypar{ \begin{array}{#1} #2 \end{array} } }
\newcommand{\myvec}[1]{ \mymatrix{c}{#1} }
\newcommand{\ket}[1]{|#1\rangle}
\newcommand{\bra}[1]{\langle #1|}
\author{Aleksejs Naumovs\affiliationmark{1}\thanks{Naumovs gave the constructions in Sections 4 and 8 when he was respectively a 11-th and 12-th grade high school student in Riga Secondary School No.13, R\={\i}ga, Latvia.}
  \and Maksims Dimitrijevs\affiliationmark{2}\thanks{Dimitrijevs was partially supported by University of Latvia projects AAP2016/B032 ``Innovative information technologies'' and ZD2018/20546 ``For development of scientific activity of Faculty of Computing''.}
  \and Abuzer Yakary{\i}lmaz\affiliationmark{2}\thanks{Yakary\i lmaz was partially supported by Akad\={e}misk\={a} person\={a}la atjaunotne un kompeten\v{c}u pilnveide Latvijas Universit\={a}t\={e} l\={\i}guma Nr. 8.2.2.0/18/A/010 LU re\c{g}istr\={a}cijas Nr. ESS2018/289, ERC Advanced Grant MQC, the ERDF project 1.1.1.5/18/A/020, and Kvantu datori ar konstantu atmi\c{n}u l\={\i}g. Nr. 1.1.1.15/19/A/005 LU re\c{g}istr\={a}cijas Nr. ESS2020/338.}
  }
\title[The minimal PFAs and QFAs recognizing uncountably many languages with fixed cutpoints]{The minimal probabilistic and quantum finite automata recognizing uncountably many languages with fixed cutpoints}
\affiliation{
  % one line per affiliation, no postal codes, grant numbers or similar
  Faculty of Physics, Mathematics and Optometry, University of Latvia
  \\
  Center for Quantum Computer Science, Faculty of Computing, 
University of Latvia
}
\keywords{probabilistic automaton, quantum automaton, uncountable languages, recognition with cutpoint, unary languages}
\begin{document}
\publicationdetails{22}{2020}{1}{13}{5450}
\maketitle
\begin{abstract}
  It is known that 2-state binary and 3-state unary probabilistic finite automata and 2-state unary quantum finite automata recognize uncountably many languages with cutpoints. These results have been obtained by associating each recognized language with a cutpoint and then by using the fact that there are uncountably many cutpoints. In this note, we prove the same results for fixed cutpoints: each recognized language is associated with an automaton (i.e., algorithm), and the proofs use the fact that there are uncountably many automata. For each case, we present a new construction.
\end{abstract}
\section{Introduction}

It is a well-known fact that all \textit{Turing machines} (TMs) form a countable set as each TM has a finite description. Moreover, since each TM defines (recognizes) a single language as a recognizer, the class of languages recognized by TMs, called \textit{recursively enumerable languages}, also forms a countable set. On the other hand, all (unary or binary) languages form an uncountable set. Thus, we can easily conclude that there are some languages that cannot be recognized by (associated with) any TM \cite{Tur37}.

As a very restricted form of TM, a \textit{finite state automaton} (FSA) \cite{RS59} reads the input once from left to right and then gives its decision. Their computational power is significantly less, and the class of languages recognized by them is called the class of \textit{regular languages}, a proper sub-class of recursively enumerable languages. On the other hand, a FSA can be enhanced by making probabilistic choices, called \textit{probabilistic finite automaton} (PFA) \cite{Rab63}. In contrast to FSAs or TMs, all PFAs form an uncountable set if they are allowed to use real-valued transition probabilities. Then, one may ask whether the languages recognized by PFAs also form an uncountable set or not. 

A FSA can either accept or reject a given input string, and so, it is easy to classify the set of all strings into two sets, i.e., the language recognized by the automaton and its complement. However, a PFA defines a probability distribution on the input strings. Therefore, in order to split the set of all strings into two sets, we additionally use a threshold, called cutpoint. That is, a PFA defines (recognizes) a language with a cutpoint, which contains all strings accepted with probability greater than the cutpoint. Such a language is called \textit{stochastic} \cite{Rab63}. 

A single PFA may define different stochastic languages with different cutpoints. Rabin, in his seminal paper on PFAs \cite{Rab63}, presented a two-state PFA using only rational-valued transitions that recognizes uncountably many languages with cutpoints. That is, even a single PFA can define an uncountable set of languages. Since there is a single PFA, the uncountability result is based on the fact that there are uncountably many cutpoints. 

A similar result was given for a quantum counterpart of PFA, \textit{quantum finite automaton} (QFA), \cite{ShurY14} that a  two-state rational-valued QFA over a single letter  (unary) alphabet can define an uncountable set of languages. This result is stronger in a way that Rabin's result was given for binary languages but the quantum result was given for unary languages. Besides, a similar uncountability result for unary PFAs can be obtained only for three states \cite{ShurY16}.

The minimal binary and unary PFAs and unary QFAs defining uncountably many languages with cutpoints have two, three, and two states, respectively. All these three results were given by using the fact that the cardinality of cutpoints is uncountable. We find it interesting and natural to obtain the same results by using the fact that the cardinality of automata (i.e., algorithms) is uncountable. In other words, we obtain the same result for fixed cutpoints. 

In this note, we show that two-state binary PFAs, two-state unary QFAs, and three-state unary PFAs can recognize uncountably many languages with fixed cutpoints. 

We present the notations and definitions used throughout this note in the next section. Then, we explain the original proof given by Rabin~\cite{Rab63} in Section~\ref{sec:Rabin}, and we present our modification on Rabin's proof in Section~\ref{sec:our-rabin}. In Section~\ref{sec:quantum}, we explain the quantum version of Rabin's proof. After this, we present our quantum result in Section~\ref{sec:our-quantum}. Lastly, we explain the known uncountability results for unary PFAs in Section~\ref{sec:unary-PFA}, and we present our construction for unary PFAs in Section~\ref{sec:our-unary-PFA}. 

\section{Background}

\textit{The input alphabet} is represented by $ \Sigma $. \textit{The empty string} is represented by $ \varepsilon $. The set of all strings defined on $ \Sigma $ is denoted by $\Sigma^*$. Moreover, $ \Sigma^+ = \Sigma^* \setminus \{ \varepsilon \} $. For any given string $ x \in \Sigma^* $, $ |x| $ represents its length, $ x^r $ represents its reverse string, and, if $|x|>0$, $ x[i] $ denotes its $i$-th symbol, where $ 1 \leq i \leq |x| $. For any binary string $ x \in \{0,1\}^+ $, $ bin(x)  $ represents  its binary representation, i.e., $ bin(x) = 0.~x[1] ~ x[2] ~ \cdots ~ x[|x|] $.  The accepting probability of an automaton $ M $ on a given input string $ x $ is given by $ f_M(x) $. The cutpoints are defined in the interval $ [0,1) $. 

The (probabilistic or quantum) state of an $n$-state automaton is represented by an $n$-dimensional (stochastic or unit) column vector. Any given input is read from left to right and symbol by symbol. The computation starts in a single state. The computation is traced by an $n$-dimensional column vector. For each symbol, an operator is associated. Whenever this symbol is read, the current vector is multiplied from the left by the corresponding linear operator (represented as $(n \times n)$-dimensional matrix). Based on the final state vector, the accepting probability of the input by the automaton is calculated. 

For pre- and post-processing, probabilistic and quantum finite automata (PFAs and QFAs, respectively) can read one specific symbol before reading the input and one specific symbol after reading the input, respectively. However, in this paper, the models do not do any pre- or post-processing. 

Formally, an $n$-state ($n>0$) PFA $ P $ is a 5-tuple
\[
    P = \{ S, \Sigma, \{ A_\sigma \mid \sigma \in \Sigma \}, s_i,S_a \},
\]
where $ S = \{ s_1,\ldots,s_n \} $ is the set of states, $ A_\sigma $ is an $(n\times n)$-dimensional (left) stochastic matrix associated to symbol $ \sigma \in \Sigma $, $ s_i \in S $ is the initial state, and $ S_a \subseteq S $ is the set of accepting state(s).

The initial probabilistic state $ v_0 $ is a zero-one stochastic vector having 1 in its $ i $-th entry. For the empty string, the final probabilistic state is $ v_f = v_0 $. For a given input $ x \in \Sigma^+ $, the final probabilistic state is
\[
    v_f = A_{x[|x|]} A_{x[|x|-1]} \cdots A_{x[1]} v_0.
\]
Then, the accepting probability of $ x $ by $ P $ is 
\[
    f_P(x) = \sum_{s_j \in S_a} v_f(j).
\]

Any language that is recognized by a PFA with cutpoint is called stochastic, and the class of stochastic languages forms an uncountable set \cite{Rab63}.

We assume the reader is familiar with the basic concepts (i.e., unitary evolution and projective measurements) used in quantum computation (see \cite{NC00} and \cite{SayY14} for a complete and quick reference, respectively). As a convention of quantum computation, a (complex-valued) column vector $ v $ can be represented as $ \ket{v} $ (and its conjugate transpose row vector is represented as $ \bra{v} $). 

In the literature, there are many different definitions of QFAs \cite{AY15A}. Here we use the known most restricted version, so-called Moore-Crutchfield or Measure-Once QFA \cite{MC00}.

Formally, an $n$-state ($n>0$) QFA $ M $ is a 5-tuple
\[
    M = \{ Q, \Sigma, \{ U_\sigma \mid \sigma \in \Sigma \}, q_i,Q_a \},
\]
where $ Q = \{ q_1,\ldots,q_n \} $ is the set of states, $ U_\sigma $ is an $(n\times n)$-dimensional unitary matrix associated to symbol $ \sigma \in \Sigma $, $ q_i \in Q $ is the initial state, and $ Q_a \subseteq Q $ is the set of accepting state(s). The vector $ \ket{q_j} $ represents the $n$-dimensional zero-one unit vector having 1 in its $j$-th entry, where $ 1 \leq j \leq n$.

The initial quantum state is $ \ket{v_0} = \ket{q_i} $. For the empty string, the final quantum state is $ \ket{v_f} = \ket{v_0} $. For a given input $ x \in \Sigma^+ $, the final quantum state is
\[
    \ket{v_f} = U_{x[|x|]} U_{x[|x|-1]} \cdots U_{x[1]} \ket{v_0}.
\]
Then, the accepting probability of $ x $ by $ M $ is 
\[
    f_M(x) = \sum_{q_j \in Q_a} \big| \ket{v_f}(j) \big|^2,
\]
which is obtained by making a measurement in the computational basis at the end of the computation. 

Any language recognized by a QFA (even if it is the most general variant of QFA) with cutpoint was shown to be stochastic \cite{AY15A}, and vice versa in most of the cases (any stochastic language can be recognized by almost all variants of QFAs \cite{YS11,AY15A}). 

A single-letter alphabet and any automaton defined over it is called unary. Similarly, any two-letter alphabet and any automaton defined over it is called binary.

\section{Rabin's proof}
\label{sec:Rabin}

We start with the proof given by Rabin. We name the PFA presented by Rabin as $ P $, which has two states, say $ s_1 $ and $ s_2 $. The computation starts in $ s_1 $ and the only accepting state is $ s_2 $. The PFA $ P $ operates on binary strings, defined on $ \Sigma = \{0,1\} $. Let $ x \in \Sigma^+ $ be a given input. (It is clear that $ f_P(\varepsilon) = 0 $.)

  We trace the computation of $ P $ by a 2-dimensional column stochastic vector (probabilistic state), which is $ v_0 = \myvec{1 \\ 0} $ at the beginning. After reading symbol $ 0 $ and $ 1 $, $ P $ applies the following stochastic operators to its probabilistic state
\[
	A_0 = \mymatrix{cc}{1 & 1/2 \\ 0 & 1/2}    
	\mbox{ and } 
	A_1 = \mymatrix{cc}{1/2 & 0 \\ 1/2 & 1},
\] 
respectively. 

The probabilistic state after reading $ x $ is $ v_f = v_{|x|} $. 
%\[
%	v_{|x|} = A_{x[|x|]} \cdot A_{x[|x|-1]} \cdot \cdots \cdot A_{x[1]} v_0.
%\]
Then, $ f_P(x) $ is the second entry of $ v_{|x|} $. By using induction, we can easily see that $ f_P(x) = bin(x^r) $. Thus, for any rational number between 0 and 1, say $t \in (0,1) \cap \mathbb{Q} $, there exists at least one binary string $ y $ such that $ f_P(y) = t $. 

Let $ \lambda_1 < \lambda_2 $ be two real-valued cutpoints between 0 and 1. Since the rational numbers are dense on real numbers, there exists at least one binary string, say $ z $, such that $ \lambda_1 <  f_{P}(z) < \lambda_2 $. Thus we can conclude that $ P $ with cutpoint $ \lambda_1 $ recognizes the language that is a superset of the language recognized by $ P $ with cutpoint $ \lambda_2 $. More generally, for any given cutpoint,  $ P $ recognizes a different language. Since there are uncountably many cutpoints, $ P $ recognizes uncountably many stochastic languages (, and hence the class of stochastic languages forms an uncountable set).

In this proof, the existence of uncountably many stochastic languages is shown based on a single PFA, and thus the result follows from the fact that there are uncountably many cutpoints. We find it interesting and natural to obtain the same result by using the fact that there are uncountably many PFAs. In other words, we fix the cutpoint and show that there is a set of uncountably many PFAs such that each PFA recognizes a different language with this fixed cutpoint. 

We note that the result for 3-state PFAs is trivial since it is known that (e.g., see \cite{Paz71}) if language $ L $ is defined by an $n$-state PFA $ N $ with cutpoint $ \lambda \in [0,1] $, then for any given $ \lambda' \in (0,1) $, there always exists an $(n+1)$-state PFA, say $ N' $, recognizing $ L $ with cutpoint $ \lambda'  $.

\section{Our modification on Rabin's proof}
\label{sec:our-rabin}

Here we show that, for any given nonzero cutpoint, two-state PFAs can define uncountably many languages. We believe that our proof is more elegant since it is based on the existence of uncountably many PFAs (algorithms).

\begin{lemma}
	For a given $ \alpha \in (0,1) $, there exists a 2-state PFA $ P_\alpha $ accepting any non-empty binary string $ x $ with probability $ \alpha \cdot bin(x^r) $. 
\end{lemma}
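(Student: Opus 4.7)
The plan is to mimic Rabin's two-state construction but rescale the dynamics so that the probability mass on the accepting state after reading $x$ equals $\alpha\cdot bin(x^r)$ instead of $bin(x^r)$. I would take the same state set $\{s_1,s_2\}$, the same initial vector $(1,0)^T$, and the same accepting set $\{s_2\}$. The operator $A_0$ can be kept identical to the one in Section~\ref{sec:Rabin}, since appending a $0$ on the right of $x$ sends $bin(x^r)$ to $bin(x^r)/2$, a transformation that does not involve $\alpha$ and therefore requires no modification.

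The only ingredient that must change is $A_1$. Since $bin(1\,x^r)=\tfrac{1}{2}+\tfrac{1}{2}bin(x^r)$, multiplying by $\alpha$ gives $\alpha\cdot bin(1\,x^r)=\tfrac{\alpha}{2}+\tfrac{1}{2}\bigl(\alpha\cdot bin(x^r)\bigr)$, so I want the update on the accepting coordinate to read $q\mapsto \tfrac{\alpha}{2}+\tfrac{q}{2}$. Writing a generic column-stochastic $A_1$ and using that the probabilistic state always satisfies $p+q=1$, this requirement collapses to two linear equations with a unique $2\times 2$ column-stochastic solution whose entries depend affinely on $\alpha$. I would then check that each of those four entries lies in $[0,1]$ for every $\alpha\in(0,1)$, which makes $A_1$ a legal stochastic operator.

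Correctness would follow by induction on $|x|$ with the invariant ``after reading $x\in\Sigma^+$ the probabilistic state equals $(1-\alpha\cdot bin(x^r),\ \alpha\cdot bin(x^r))^T$''. The base cases $x\in\{0,1\}$ are immediate from the definitions of $A_0$ and $A_1$; the inductive step multiplies the invariant vector by $A_0$ or $A_1$ and uses the two identities $bin(0\,x^r)=\tfrac{1}{2}bin(x^r)$ and $bin(1\,x^r)=\tfrac{1}{2}+\tfrac{1}{2}bin(x^r)$, which are exactly what the two operators are designed to implement. Summing over $S_a=\{s_2\}$ then yields $f_{P_\alpha}(x)=\alpha\cdot bin(x^r)$.

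I do not expect any serious obstacle here: the only place that requires a moment of care is the stochasticity check for $A_1$, but that reduces to confirming that four explicit affine functions of $\alpha$ take values in $[0,1]$ on the open unit interval, which is routine.
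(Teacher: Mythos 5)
Your proposal is correct and is essentially the paper's own proof: the unique column-stochastic $A_1$ you derive from the update $q\mapsto \tfrac{\alpha}{2}+\tfrac{q}{2}$ is exactly the matrix $A_{\alpha,1}=\mymatrix{cc}{1-\alpha/2 & (1-\alpha)/2 \\ \alpha/2 & (1+\alpha)/2}$ used in the paper, with the same unchanged $A_0$ and the same inductive invariant. The stochasticity check you flag is indeed routine and goes through for all $\alpha\in(0,1)$.
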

\begin{proof}
	We use the PFA given by Rabin after modifying the transition matrix for symbol $1$. The PFA $ P_\alpha $ has two states: the first one is the initial and the only accepting state is the second one. The transition matrices for symbols 0 and 1 are	
	\[
		A_{\alpha,0} = \mymatrix{cc}{1 & 1/2 \\ 0 & 1/2} 
		\mbox{ and }
		A_{\alpha,1} = \mymatrix{cc}{1-\alpha/2 & (1-\alpha)/2 \\ \alpha/2 & (1+\alpha)/2},
	\]
	respectively. 
	We use induction to prove our Lemma.
	
	\textit{Base case:} After reading string $ x=0 $ or $x=1$, the probabilistic state is
	\[
		v_0 = \myvec{1\\0} = A_{\alpha,0} \myvec{1\\0} 
		\mbox{ or }
		v_1 = \myvec{1 - \alpha/2  \\ \alpha/2} = A_{\alpha,1} \myvec{1\\0}, 
	\]
	respectively, and hence the accepting probability for string 0 or 1 is 0 or $\alpha/2 = \alpha \cdot bin(1^r) $, respectively. $ \vartriangleleft $
	
	\textit{Inductive step:} Assume that, after reading $ x \in \{0,1\}^+ $, the probabilistic state is
	\[
		v_{|x|} = \myvec{ 1- \alpha \cdot bin(x^r) \\ \alpha \cdot bin(x^r) }.
	\]
	Then, after reading $ x0 $, the new probabilistic state is
	\begin{eqnarray*}
		v_{|x|+1} &  = & \mymatrix{cc}{1 & 1/2 \\ 0 & 1/2}  \myvec{ 1- \alpha \cdot bin(x^r) \vspace*{3px} \\ \alpha \cdot bin(x^r) } 
		\\
		 & = & 
		\myvec{ 1 - \alpha \cdot \frac{bin(x^r)}{2} \vspace*{7px} \\  \alpha \cdot \frac{bin(x^r)}{2}  }
		\\
		& = &
		\myvec{ 1 - \alpha \cdot bin((x0)^r) \vspace*{3px} \\ \alpha \cdot bin((x0)^r) }
	\end{eqnarray*}
	and hence the accepting probability is $\alpha \cdot bin((x0)^r)$.
Similarly, after reading $ x1 $, the new probabilistic state is
	\begin{eqnarray*}
		v_{|x|+1} &  = & \mymatrix{cc}{1-\alpha/2 & (1-\alpha)/2 \\ \alpha/2 & (1+\alpha)/2}  \myvec{ 1- \alpha \cdot bin(x^r) \vspace*{3px} \\ \alpha \cdot bin(x^r) } 
		\\
		 & = & 
		\myvec{ \overline{1} \vspace*{5px} \\  \dfrac{\alpha}{2} - \dfrac{\alpha^2 \cdot bin(x^r)}{2} + \dfrac{\alpha \cdot bin(x^r)}{2} + \dfrac{\alpha^2 \cdot bin(x^r)}{2}  }
		\\
		& = &
		\myvec{ \overline{1} \vspace*{3px} \\ \alpha \cdot \mypar{ \dfrac{1}{2} + \dfrac{bin(x^r)}{2} } }
		\\
		& = &
		\myvec{ \overline{1} \vspace*{3px} \\ \alpha \cdot bin((x1)^r)  },
	\end{eqnarray*}
	and hence the accepting probability is $\alpha \cdot bin((x1)^r)$, where $ \overline{1} $ is the value making the column summation to 1.	$ \vartriangleleft $
	
	We conclude that for any given input string $ x \in \{0,1\}^+ $, $ f_{P_\alpha}(x) = \alpha \cdot bin (x^r) $.
\end{proof}

\begin{theorem}
	For any given cutpoint $ \lambda \in (0,1) $, 2-state PFAs recognize uncountably many stochastic languages with cutpoint $ \lambda $.
\end{theorem}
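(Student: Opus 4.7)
The plan is to use the family of PFAs $\{P_\alpha\}_{\alpha \in (0,1)}$ from the previous lemma and show that for the fixed cutpoint $\lambda$, restricting the parameter $\alpha$ to an appropriate subinterval yields uncountably many pairwise distinct recognized languages. First I would fix $\lambda \in (0,1)$ and observe that the language recognized by $P_\alpha$ with cutpoint $\lambda$ is
\[
L_\alpha = \{ x \in \{0,1\}^+ : f_{P_\alpha}(x) > \lambda \} = \{ x \in \{0,1\}^+ : \alpha \cdot bin(x^r) > \lambda \}.
\]
Since $bin(x^r) \in [0,1)$ for every non-empty binary string $x$, the condition $\alpha \cdot bin(x^r) > \lambda$ can only be satisfied when $\alpha > \lambda$; otherwise $L_\alpha = \emptyset$. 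Hence I restrict attention to $\alpha \in (\lambda, 1)$, for which
\[
L_\alpha = \{ x \in \{0,1\}^+ : bin(x^r) > \lambda/\alpha \},
\]
with the threshold $\lambda/\alpha$ ranging over $(\lambda,1)$ as $\alpha$ ranges over $(\lambda,1)$.

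Next I would establish pairwise distinctness. Take two parameters $\alpha_1 \neq \alpha_2$ in $(\lambda,1)$; then the thresholds $\beta_i := \lambda/\alpha_i$ are distinct, say $\beta_1 < \beta_2$. The set of values $\{bin(x^r) : x \in \{0,1\}^+\}$ is exactly the set of dyadic rationals in $[0,1)$, which is dense in $[0,1]$, so there exists a binary string $z$ with $\beta_1 < bin(z^r) < \beta_2$. Then $z \in L_{\alpha_1} \setminus L_{\alpha_2}$, proving $L_{\alpha_1} \neq L_{\alpha_2}$.

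Finally, the map $\alpha \mapsto L_\alpha$ from $(\lambda,1)$ to the family of recognized languages is therefore injective, and since $(\lambda,1)$ is an uncountable set of real numbers and each $P_\alpha$ is a valid 2-state PFA with the fixed cutpoint $\lambda$, we obtain uncountably many distinct stochastic languages recognized by 2-state PFAs with cutpoint $\lambda$, as required.

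The routine part is the computation of $L_\alpha$ from the lemma. The main subtlety I would watch for is the boundary behaviour: one must rule out values $\alpha \leq \lambda$ (which collapse to the empty language) and verify that the dyadic rationals arising as $bin(x^r)$ are genuinely dense in $(\lambda,1)$, so that strictly between any two thresholds $\beta_1 < \beta_2$ there is a witness string separating the two languages. Once this density observation is in place the rest of the argument is a direct bijection between the uncountable parameter set $(\lambda,1)$ and an uncountable family of stochastic languages.
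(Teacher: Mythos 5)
Your proof is correct and follows essentially the same route as the paper: both use the family $P_\alpha$ from Lemma~1 and the density of the values $bin(x^r)$ (the dyadic rationals in $[0,1)$) to produce a string separating the languages of two distinct parameters at the fixed cutpoint $\lambda$. If anything, your version is slightly more careful than the paper's, which invokes $P_{\lambda/\alpha_1}$ for arbitrary $\alpha_1\in(0,1)$ without noting that Lemma~1 requires $\lambda/\alpha_1<1$ --- precisely the restriction to $\alpha\in(\lambda,1)$ that you make explicit.
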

\begin{proof}

Let $ 0< \alpha_1 < \alpha_2 < 1 $ be two real numbers. Then, there exists a string $ z $ such that $ \alpha_1 < f_P(z) < \alpha_2  $, where $ P $ is the PFA given by Rabin. Then, we can easily derive these two inequalities
	\[
		\lambda < \frac{\lambda}{\alpha_1} f_P(z) 
		\mbox{ and }
		 \frac{\lambda}{\alpha_2} f_P(z) < \lambda. 
	\]
	Due to Lemma 1, we can conclude that
	\[
		\lambda < f_{P_{\frac{\lambda}{\alpha_1}}} (z) \mbox{  and }
		 f_{P_{\frac{\lambda}{\alpha_2}}} (z) < \lambda.
	\]
	Thus, the string $ z $ is in the language recognized by $ P_{\frac{\lambda}{\alpha_1}} $ with cutpoint $ \lambda $, and, it is not in the language recognized by $ P_{\frac{\lambda}{\alpha_2}} $ with cutpoint $ \lambda $. Therefore, for any given two different real numbers between 0 and 1, there exist two PFAs such that they recognize different languages with cutpoint $ \lambda $. 
\end{proof}

\begin{corollary}
	2-state PFAs can recognize uncountably many stochastic languages with cutpoint $1/2$.
\end{corollary}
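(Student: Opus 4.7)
The plan is to invoke Theorem 1 directly at the specific value $\lambda = 1/2 \in (0,1)$. The corollary is a genuine specialization, so no new construction is required; the proof is essentially a one-line citation, which I would accompany by an explicit description of the witnessing family for concreteness.

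Specifically, I would exhibit the uncountable family $\{P_\alpha : \alpha \in (1/2, 1)\}$ from Lemma 1, where $P_\alpha$ accepts $x$ with probability $\alpha \cdot bin(x^r)$. At cutpoint $1/2$, the language recognized by $P_\alpha$ is $L_\alpha = \{x \in \{0,1\}^+ : bin(x^r) > 1/(2\alpha)\}$. To show that distinct $\alpha_1 < \alpha_2$ in $(1/2, 1)$ yield distinct $L_{\alpha_i}$, I would note that the corresponding thresholds satisfy $1/(2\alpha_2) < 1/(2\alpha_1)$ strictly, and by density of $\{bin(x^r) : x \in \{0,1\}^+\}$ in $[0,1)$ --- the same density fact that powers Rabin's original proof --- there exists a string $z$ with $1/(2\alpha_2) < bin(z^r) < 1/(2\alpha_1)$, giving $z \in L_{\alpha_2} \setminus L_{\alpha_1}$. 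Since $(1/2,1)$ is uncountable, this produces uncountably many distinct languages.

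There is essentially no obstacle: every ingredient (Lemma 1 and the density of binary fractional representations) has already been established, and the argument of Theorem 1 instantiates at $\lambda=1/2$ without modification. The only minor detail worth flagging is the restricted parameter range $\alpha \in (1/2,1)$ rather than the full $(0,1)$ allowed by Lemma 1: for $\alpha \le 1/2$ the acceptance probability $\alpha \cdot bin(x^r)$ cannot exceed $1/2$, so $P_\alpha$ would trivially recognize the empty language and contribute nothing to the uncountable family.
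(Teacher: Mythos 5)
Your proposal is correct and matches the paper's intent exactly: the corollary is stated as an immediate specialization of Theorem 1 at $\lambda = 1/2$, and your explicit family $\{P_\alpha : \alpha \in (1/2,1)\}$ is precisely the family implicit in the paper's proof of Theorem 1 (there the automata are indexed by $\lambda/\alpha$ with $\alpha\in(\lambda,1)$, which for $\lambda=1/2$ again sweeps out $(1/2,1)$). Your remark that $\alpha \le 1/2$ yields the empty language is a correct and worthwhile clarification of the parameter range that the paper leaves implicit.
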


\section{Quantum version of Rabin's result}
\label{sec:quantum}

The quantum version of Rabin's result was given for 2-state unary real-valued QFAs \cite{ShurY14}, defined as $ M_\alpha = \{ \{q_1,q_2\},\{0\},R_{\alpha \cdot 2\pi},q_1,\{q_1\} \} $, where $ \alpha \in (0,1) $ is an irrational number and $ R_{\alpha \cdot 2\pi} $ is the counter-clockwise rotation with angle $ \alpha \cdot 2\pi $ on $ \ket{q_1}-\ket{q_2} $ plane. It is clear that $ R_{\alpha \cdot 2\pi} $ is a real-valued unitary matrix. Moreover, all quantum states on $ \ket{q_1}-\ket{q_2} $ plane form the unit circle.

We fix $ \alpha $ such that $ R_{\alpha \cdot 2\pi} = \mymatrix{cr}{3/5&-4/5 \\ 4/5 & 3/5 }$. The automaton $ M_\alpha $ starts in $ \ket{q_1} = \myvec{ 1 \\ 0} $, and rotates on $ \ket{q_1}-\ket{q_2} $ plane with angle $ \alpha \cdot 2\pi $ for each input symbol $ 0 $. Let $ \ket{v_j} $ be the quantum state after reading $j$ symbols, i.e., $ \ket{v_j} = \cos(j\cdot \alpha \cdot 2\pi) \ket{q_1} + \sin(j\cdot \alpha \cdot 2\pi) \ket{q_2} $. Then, the accepting probability of $ 0^j $ is 
\[
    f_{M_\alpha}(0^j) = \cos^2(j\cdot \alpha \cdot 2\pi).
\]

Since $ \alpha $ is irrational, $ \{ \ket{v_j} \mid j >0 \} $, the set of all quantum states that $ M_\alpha $ can be in, is dense on the unit circle. Similarly, $ \{ \cos^2(j\cdot \alpha \cdot 2\pi) \mid j \geq 0 \} $, the accepting probabilities of all inputs by $ M_\alpha $, is dense on $ (0,1) $. Therefore, for any given two cutpoints $ \lambda_1 < \lambda_2 $, there is always an input $ 0^j $ such that 
\[
    \lambda_1 <  \cos^2(j\cdot \alpha \cdot 2\pi) < \lambda_2.
\]
Therefore, the automaton $ M_\alpha $ recognizes a different language for each cutpoint in $ (0,1) $. In other words, the class of languages recognized by $ M_\alpha $ with cutpoints forms an uncountable set.

\section{Our result for unary QFAs}
\label{sec:our-quantum}

We use the same automaton family $ \{ M_\alpha \} $ given in the previous section by restricting the irrational parameter $ \alpha \in (0,1/4) $.

Let $ \alpha $ and $ \beta $ be two different irrational numbers in $ (0,1/4) $. Then, their digit by digit binary representations are as follows:
\[
    \alpha = 0.\alpha_1 \alpha_2 \alpha_3 \alpha_4 \cdots \alpha_j \cdots
\]
and
\[
    \beta = 0.\beta_1 \beta_2 \beta_3 \beta_4 \cdots \beta_j \cdots ,
\]
where $ \alpha_1 = \alpha_2 = \beta_1 = \beta_2 = 0 $.

Since $ \alpha $ and $ \beta $ are different, there exists a minimal $ j > 2 $ such that $ \alpha_j \neq \beta_j $. 

Suppose that $ \alpha_j = 1 $ and $\beta_j=0$. We use the input of length $ 2^{j-3} $, say $ x_j $. After reading $ x_j $, $ M_\alpha $ and $ M_\beta $ rotate by angles 
\[
    \theta_1 = 2^{j-3} \cdot \alpha \cdot 2 \pi 
    ~~\mbox{and}~~
     \theta_2 = 2^{j-3} \cdot \beta \cdot 2 \pi .
\]
The angles $ \theta_1 $ and $ \theta_2 $ are congruent to 
\[
    \label{eq:congruent}
    \overline{\theta_1} =
    (0. \alpha_{j-2}\alpha_{j-1} 1) 2\pi + \theta_1'
    ~~\mbox{and}~~
    \overline{\theta_2} =
    (0. \alpha_{j-2}\alpha_{j-1} 0) 2\pi + \theta_2'
\]
modulo $ 2\pi $, respectively, where $ \theta_1',\theta_2' < \frac{\pi}{4} $. 
We can rewrite $ \overline{\theta_1} $ and $\overline{\theta_2}$ as
\[
   \overline{\theta_1}=
   \alpha_{j-2} \cdot \pi +
    \alpha_{j-1} \cdot \frac{\pi}{2} +
    \frac{\pi}{4} + \theta_1'
    ~~\mbox{and}~~
    \overline{\theta_2} =
    \alpha_{j-2} \cdot \pi +
    \alpha_{j-1} \cdot \frac{\pi}{2} +
     \theta_2'.
\]
The quantum states of $ M_\alpha $ and $ M_\beta $ lie in the same quadrant after reading the input $ x_j $. Here the values of $ \alpha_{j-2} $ and $ \alpha_{j-1} $ determine the number of a quadrant. But, in any case, we can have either 
\[
    f_{M_\alpha}(x_j) < \frac{1}{2} < f_{M_\beta}(x_j)
\]
or
\[
    f_{M_\beta}(x_j) < \frac{1}{2} < f_{M_\alpha}(x_j).
\]
In the 1st and 3rd quadrants, we have $ \cos^2(\overline{\theta_1}) < \frac{1}{2} $ and $ \cos^2(\overline{\theta_2}) > \frac{1}{2} $ as the quantum state of $ M_\alpha $ is closer to the $ \ket{q_2} $-axis and the quantum state of $ M_\beta $ is closer to the $ \ket{q_1} $-axis. In the 2nd and 4th quadrants,  
we have $ \cos^2(\overline{\theta_1}) > \frac{1}{2} $ and $ \cos^2(\overline{\theta_2}) < \frac{1}{2} $ as the quantum state of $ M_\alpha $ is closer to the $ \ket{q_1} $-axis and the quantum state of $ M_\beta $ is closer to the $ \ket{q_2} $-axis. We listed all cases in the following table.
\[
    \begin{array}{|c|c|c|c|c|}
    \hline 
        \alpha_{j-2} & \alpha_{j-1} & quadrant &  f_{M_\alpha}(x_j)  & f_{M_\beta}(x_j) 
        \\ \hline \hline
         0 & 0 & I & < 1/2 & > 1/2
        \\ \hline 
        0 & 1 & II & > 1/2 & <1/2
        \\ \hline 
        1 & 0 & III & < 1/2 & > 1/2
        \\ \hline 
        1 & 1 & IV & > 1/2 & < 1/2
        \\ \hline 
    \end{array}
\]

The case in which $ \alpha_j =0 $ and $ \beta_j =1 $ is symmetric. By using the same arguments, we obtain the above table after interchanging the last two headers ($ f_{M_\alpha}(x_j) $ and $ f_{M_\beta}(x_j) $). Therefore, we can conclude the following result.

\begin{theorem}
    For any given two irrational numbers $ \alpha $ and $ \beta $ in $ (0,1/4) $, the QFAs $ M_\alpha $ and $ M_\beta $ recognize different languages with cutpoint $ \frac{1}{2} $.
\end{theorem}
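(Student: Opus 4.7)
The plan is to leverage the case analysis already carried out in the paragraphs leading up to the theorem. Given two distinct irrationals $\alpha, \beta \in (0, 1/4)$, I let $j > 2$ be the smallest index at which their binary expansions disagree and, by symmetry, assume $\alpha_j = 1$ and $\beta_j = 0$. I then exhibit the word $x_j = 0^{2^{j-3}}$ as a \emph{separating witness}: the claim will be that $f_{M_\alpha}(x_j)$ and $f_{M_\beta}(x_j)$ always lie on opposite sides of $1/2$, so that $x_j$ belongs to exactly one of the two languages recognized with cutpoint $1/2$, forcing those languages to differ.

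The core computation has essentially been done: multiplying by $2^{j-3}$ shifts the binary expansions, so the reduced rotation angles are $\overline{\theta_1} = \alpha_{j-2}\pi + \alpha_{j-1}(\pi/2) + \pi/4 + \theta_1'$ and $\overline{\theta_2} = \alpha_{j-2}\pi + \alpha_{j-1}(\pi/2) + \theta_2'$, with $0 \le \theta_1', \theta_2' < \pi/4$. The common prefix $\alpha_{j-2}\pi + \alpha_{j-1}(\pi/2)$ fixes the quadrant in which both quantum states live, while the extra $\pi/4$ on $\overline{\theta_1}$ places the two states on opposite halves of that quadrant, separated by its bisecting diagonal.

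All that remains is reading off the preceding table: $\cos^2$ is strictly monotone within each quadrant and crosses $1/2$ exactly on the bisecting diagonal, so exactly one of $f_{M_\alpha}(x_j), f_{M_\beta}(x_j)$ lies above $1/2$ and the other below. The symmetric case $\alpha_j = 0$, $\beta_j = 1$ is handled by swapping the roles of $\alpha$ and $\beta$ throughout. Since this conclusion holds for every distinct pair of irrational parameters, the theorem follows.

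The main subtlety, already absorbed into the preceding discussion, is ensuring that the tail contributions $\theta_1', \theta_2'$ cannot push either angle across the bisecting diagonal of its quadrant; this is guaranteed by the restriction $\alpha, \beta < 1/4$, which (combined with the length $2^{j-3}$) bounds each tail by $\pi/4$, matching precisely the $\pi/4$ gap contributed by the single differing bit at position $j$. Everything else is a direct transcription of the case table.
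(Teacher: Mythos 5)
Your proposal is correct and follows essentially the same route as the paper: the paper's proof of this theorem is precisely the preceding discussion (the witness $0^{2^{j-3}}$, the reduced angles $\overline{\theta_1},\overline{\theta_2}$ in a common quadrant separated by its bisecting diagonal, and the four-row table), and you reproduce that argument faithfully, including the symmetric case. The only point worth making explicit in either version is that irrationality of $\alpha$ and $\beta$ guarantees $\theta_1',\theta_2'$ are strictly positive, so neither state lands exactly on the bisecting diagonal where $\cos^2$ equals $1/2$.
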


\begin{corollary}
    The class of languages recognized by 2-state unary real-valued QFAs with cutpoint $ \frac{1}{2} $ forms an uncountable set.
\end{corollary}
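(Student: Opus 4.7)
The plan is to obtain the Corollary immediately from Theorem 2 by a cardinality argument. Theorem 2 establishes that the assignment $\alpha \mapsto L(M_\alpha, 1/2)$, sending each irrational $\alpha \in (0,1/4)$ to the language recognized by $M_\alpha$ with cutpoint $\frac{1}{2}$, is injective: distinct parameters $\alpha \neq \beta$ in $(0,1/4)$ necessarily yield distinct recognized languages, witnessed by the string $x_j$ constructed from the first differing binary digit.

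First I would observe that every $M_\alpha$ in this family is indeed a 2-state unary real-valued QFA: the state set $\{q_1,q_2\}$ has size two, the alphabet is $\{0\}$, and the matrix $R_{\alpha\cdot 2\pi}$ has entries $\cos(\alpha\cdot 2\pi)$ and $\pm\sin(\alpha\cdot 2\pi)$, all real. Hence the image of the map $\alpha \mapsto L(M_\alpha,1/2)$ sits inside the class named in the Corollary. Next I would note that the domain of this map, the set of irrationals in $(0,1/4)$, is uncountable — any nonempty open interval of reals contains continuum-many irrationals. Injectivity from Theorem 2 then transfers this uncountability to the image, so the class of languages recognized by 2-state unary real-valued QFAs with cutpoint $\frac{1}{2}$ is uncountable.

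There is essentially no obstacle here; the Corollary is a direct cardinality-theoretic restatement of Theorem 2. The only item worth a brief sanity check is that the restriction $\alpha \in (0,1/4)$ used in Section~\ref{sec:our-quantum} does not accidentally shrink the parameter space to a countable set, and the remark above about irrationals in any open interval handles this. I would therefore write the proof as a single short paragraph invoking Theorem 2 and the uncountability of the irrationals in $(0,1/4)$.
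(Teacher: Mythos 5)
Your proposal is correct and matches the paper's intent exactly: the corollary is stated without proof precisely because it follows immediately from Theorem 2 (injectivity of $\alpha \mapsto L(M_\alpha, 1/2)$ on the irrationals in $(0,1/4)$) together with the uncountability of that parameter set. Your added sanity checks (that each $M_\alpha$ is real-valued, unary, and 2-state, and that the restricted interval still contains uncountably many irrationals) are exactly the right things to verify.
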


\section{Unary PFAs}
\label{sec:unary-PFA}

Rabin's proof was given for binary 2-state PFAs. Unary 2-state PFAs can recognize only few regular languages with cutpoints \cite{Paz71,ShurY16}. Besides, any unary $n$-state PFA can recognize at most $ n $ nonregular languages with cutpoints \cite{Paz71}. Therefore, there is no direct counterpart of Rabin's result for unary PFAs. However, we can still show that unary PFAs can define uncountably many stochastic languages. The proof was given first for a family of 4-state unary PFAs \cite{ShurY14}, and then for a family of 3-state unary PFAs \cite{ShurY16}.

The former result was already given for a fixed cutpoint ($\frac{1}{4}$) (by combining the quantum result given in Section \ref{sec:quantum} and Turakainen conversion technique \cite{Tur69,Tur75}). The latter result was given for the pairs of PFAs and cutpoints, i.e., $ \{ (Q_x,\lambda_x) \mid x \in (0,1/2] \} $, and hence the proof is still based on the cardinality of cutpoints. 

In this section, we give the details of the latter result, and then, in the next section, we present our construction for the fixed cutpoint.

For each $ x \in (0,1/2] $, $ Q_x $ is a 3-state unary PFA over the alphabet $ \{0\} $. The first state is the starting state and the last state is the only accepting state. The single transition matrix for symbol $0$ is
\[
    B_x = \mymatrix{ccc}{0 & 0 & x \\ 1 & 0 & x \\ 0 & 1 & 1-2x}. 
\]
The eigenvalues of $ B_x $ are
\[
    r_1 = 1, ~~~ r_2 = -x + \sqrt{x-x^2} \cdot i, ~\mbox{ and }~ r_3=-x - \sqrt{x-x^2} \cdot i, 
\]
where $r_2$ and $ r_3 $ are complex conjugates of each other. 

For the input $ 0^m $, the accepting probability $ Q_x $ is calculated as 
\[
    f_{Q_x}(0^m) = \myvec{0~~0~~1} B_x^m \myvec{1\\0\\0}.
\]
In other words, $ f_{Q_x}(0^m) $ is equal to $ B_x^m (3,1) $. 

Due to the Cayley-Hamilton theorem and simplicity of eigenvalues, each entry of $ B_x^m  $ is a linear combination of $ r_1^m $, $ r_2^m $, and $ r_3^m $. Thus, we can write  $ f_{Q_x}(0^m) $ as
\[
    a \cdot 1^m + (b+c\cdot i)\mypar{  -x + \sqrt{x-x^2} \cdot i }^m + (b-c\cdot i)\mypar{  -x - \sqrt{x-x^2} \cdot i }^m,
\]
where the coefficients of $ r_2^m $ and $ r_3^m $ are also conjugates of each other. By using the following initial conditions
\[
     f_{Q_x}(0^0) = 0, ~~  f_{Q_x}(0^1) = 0, ~\mbox{ and, }  f_{Q_x}(0^2) = 1,   
\]
the coefficients can be calculated as
\[
    a = \dfrac{1}{3x+1}, ~~~ b = -\dfrac{1}{6x+2}, ~ \mbox{ and } ~ c = \dfrac{x+1}{(6x+2)\sqrt{x-x^2}}.
\]

The polar forms of $ r_2 $ and $ r_3 $ are respectively
\[
    \sqrt{x} \mypar{ \cos \theta_x + i \cdot \sin \theta_x } ~ \mbox{ and } ~ \sqrt{x} \mypar{ \cos \theta_x - i \cdot \sin \theta_x },
\]
where $ \theta_x = \arccos\mypar{-\sqrt{x}} $. The polar forms of $ b+c\cdot i $ and $ b-c\cdot i $ are respectively
\[
    \sqrt{b^2+c^2} \mypar{ \cos \gamma_x + i \cdot \sin \gamma_x } ~ \mbox{ and } ~ \sqrt{b^2+c^2} \mypar{ \cos \gamma_x - i \cdot \sin \gamma_x },
\]
where $ \gamma_x = \arccos\mypar{\dfrac{b}{\sqrt{b^2+c^2}}} $. Then, we can rewrite  $ f_{Q_x}(0^m) $ as
\[
    a + 2 \sqrt{b^2+c^2} \cdot x^{m/2} \cdot \cos \mypar{m \theta_x+\gamma_x},
\]
where $a$, $ \sqrt{b^2+c^2} $, and $ x^{m/2} $ are all positive values.

By picking the cutpoint $ \lambda_x = a = \dfrac{1}{3x+1} $, we can have
\[
    f_{Q_x}(0^m)  = \lambda_x + 2 \sqrt{b^2+c^2} \cdot x^{m/2} \cdot \cos \mypar{m \theta_x+\gamma_x}.
\]
Thus,  $ f_{Q_x}(0^m) $ is greater than the cutpoint $ \lambda_x $ if and only if $ \cos \mypar{m \theta_x+\gamma_x}  $ is positive. Remark that since $ x \in (0,1/2] $, 
\begin{equation}
    \label{eq:intervals1}
    \theta_x \in \left( \frac{2\pi}{4}, \frac{3 \pi}{4} \right] ~ \mbox{ and } ~ \gamma_x \in \left( \frac{9\pi}{18},\frac{11\pi}{18} \right).
\end{equation}

\begin{fact} 
    \label{fact:different-sign}
    \cite{ShurY16}
    For any $ x_1 , x_2 \in (0,1/2]  $ with $ x_1 < x_2 $, the language recognized by $ Q_{x_1} $ with cutpoint $ \lambda_{x_1} $ is different than the language recognized by $ Q_{x_2} $ with cutpoint $ \lambda_{x_2} $.
\end{fact}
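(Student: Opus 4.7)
The plan is to find, for each pair $x_1 < x_2$ in $(0,1/2]$, an input $0^m$ accepted by exactly one of $Q_{x_1}$ and $Q_{x_2}$ with respect to its own cutpoint. From the formula derived just above the fact,
\[
    f_{Q_x}(0^m) - \lambda_x \;=\; 2\sqrt{b^2+c^2}\cdot x^{m/2}\cdot \cos(m\theta_x + \gamma_x),
\]
the leading factor is strictly positive, so acceptance is governed entirely by the sign of $\cos(m\theta_x+\gamma_x)$. Thus it suffices to exhibit $m \geq 0$ for which $\cos(m\theta_{x_1}+\gamma_{x_1})$ and $\cos(m\theta_{x_2}+\gamma_{x_2})$ carry opposite signs.

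First I would verify that $\theta_x = \arccos(-\sqrt{x})$ is strictly increasing on $(0,1/2]$, whence $\delta := \theta_{x_2}-\theta_{x_1} > 0$, so that the two phase sequences $A_m := m\theta_{x_1}+\gamma_{x_1}$ and $B_m := m\theta_{x_2}+\gamma_{x_2}$ advance along the unit circle at different rates. I would then analyse the joint orbit $\{(A_m \bmod 2\pi,\, B_m \bmod 2\pi) : m \geq 0\}$ in the torus $[0,2\pi)^2$ and argue that it enters the nonempty open region
\[
    S \;=\; \{(a,b) \in [0,2\pi)^2 : \cos a > 0 > \cos b\},
\]
which contains a neighborhood of $(0,\pi)$. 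By Kronecker's theorem the orbit closure is a union of translates of a subtorus of $[0,2\pi)^2$ whose dimension is pinned down by the $\mathbb{Q}$-linear dependencies among $1,\ \theta_{x_1}/2\pi,\ \theta_{x_2}/2\pi$. In the generic case this subtorus is the whole 2-torus, and the orbit visits $S$.

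The hard part will be handling the degenerate cases where these three quantities are rationally dependent, because then the orbit is confined to a proper subtorus which must still be shown to meet $S$. A number-theoretic input -- in the spirit of Niven's theorem on rational values of the cosine at rational multiples of $\pi$ -- shows that for $x \in (0,1/2]$ the value $\theta_x/\pi$ is rational only at the isolated point $x = 1/4$ (where $\theta_{1/4} = 2\pi/3$), and similarly the pairs $(x_1,x_2)$ admitting a nontrivial relation $k_1\theta_{x_1}+k_2\theta_{x_2} \in 2\pi\mathbb{Z}$ are sparse; in each such residual case, the one-dimensional sub-orbit $\Delta_m := B_m - A_m \pmod{2\pi}$ is itself dense (or a periodic set finely spaced with respect to the narrow intervals for $\theta_x$ and $\gamma_x$ recorded in~(\ref{eq:intervals1})), which is enough to push the two phases onto opposite sides of $\pi/2$ or $3\pi/2$ for some $m$. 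The finite list of fully rational exceptions -- most notably $x = 1/4$ paired with any other $x$ -- is then dispatched by direct evaluation of the first few $f_{Q_x}(0^m)$, completing the sketch of the argument imported from~\cite{ShurY16}.
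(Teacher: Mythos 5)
Your reduction to the sign of $\cos(m\theta_x+\gamma_x)$ is exactly right and matches the paper, but from there you reach for machinery that both overshoots the problem and leaves the decisive cases unproved. The Kronecker/equidistribution route only works cleanly when $1$, $\theta_{x_1}/2\pi$, $\theta_{x_2}/2\pi$ are rationally independent; you acknowledge that the rationally dependent cases are ``the hard part'' and then dispose of them with assertions (``the one-dimensional sub-orbit is dense (or a periodic set finely spaced\ldots), which is enough'' and ``dispatched by direct evaluation''). These are not isolated curiosities: e.g.\ $x=1/4$ gives $\theta_x=2\pi/3$ and $x=1/2$ gives $\theta_x=3\pi/4$, and for such pairs the orbit of $(m\theta_{x_1}+\gamma_{x_1},\,m\theta_{x_2}+\gamma_{x_2})$ on the torus is a finite set; whether it meets your region $S$ (which, incidentally, should also include its mirror $\{\cos a<0<\cos b\}$) is precisely what has to be verified, and your sketch does not verify it. Moreover $\gamma_x$ also depends on $x$ and is generally not a rational multiple of $\pi$, so even classifying the degenerate cases is messier than invoking Niven's theorem suggests. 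As written, the proposal is a program, not a proof.

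The paper avoids all of this with a purely deterministic drift argument that uses no density statement whatsoever. Since $\theta_x=\arccos(-\sqrt{x})$ is increasing, the phase gap $\Delta_m=m(\theta_{x_2}-\theta_{x_1})+(\gamma_{x_2}-\gamma_{x_1})$ grows monotonically in steps of size $\theta_{x_2}-\theta_{x_1}$, which Eq.~\ref{eq:intervals1} bounds by $\pi/4$ (and the initial offset $|\gamma_{x_2}-\gamma_{x_1}|$ by $\pi/9$). Hence one can take the first $m+1$ for which $\Delta_{m+1}>\pi$; then $\pi<\Delta_{m+1}<2\pi$, and the paper concludes that the two phases fall into adjacent length-$\pi$ intervals on which $\cos$ has constant sign, so $0^{m+1}$ separates the languages. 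This works uniformly for every pair $x_1<x_2$, with no case split on rationality. The one idea you are missing is that you do not need the joint orbit to fill anything: you only need the \emph{difference} of the two phases to pass once through the window $(\pi,2\pi)$, and monotone accumulation of a small fixed increment guarantees that. I would also note that even the paper's final step (``they lie in consecutive intervals'') deserves a word of justification, since a gap in $(\pi,2\pi)$ alone does not force adjacent intervals; but whatever care is needed there, it is far less than what your torus argument demands.
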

\begin{proof}
    We can conclude the proof by showing the existence of a string, say $ 0^n $, such that $  \cos \mypar{n \theta_{x_1}+\gamma_{x_1}} $ and  $ \cos \mypar{n \theta_{x_2}+\gamma_{x_2}} $ have different signs.
    
    By checking the values of $ \arccos(-\sqrt{x}) $ on the interval $ (0,1/2] $, we can easily see that, if $ x_1 < x_2 $, then $ \theta_{x_1} < \theta_{x_2} $. Due to Eq.~\ref{eq:intervals1}, we know that $ \theta_{x_2} - \theta_{x_1} $ is always less than $ \frac{\pi}{4} $. Besides, $ | \gamma_{x_2}-\gamma_{x_1} | < \frac{\pi}{9} $. Thus, we can say that there exists an integer $ m $ such that 
    \begin{equation}
        \label{eq:theta-gamma-diff}
        m \mypar{\theta_{x_2}-\theta_{x_1}} + \gamma_{x_2}-\gamma_{x_1} \leq \pi
        ~\mbox{ and }~
        \pi < (m+1) \mypar{\theta_{x_2}-\theta_{x_1}} + \gamma_{x_2}-\gamma_{x_1} < 2\pi.
    \end{equation}
    
    The real line can be partitioned into intervals of length $ \pi $, in which the function $ \cos(\cdot) $ does not change its sign --- all borderline points are attached to ``negative'' intervals:
    
    \centerline{
    \unitlength=1mm
    \begin{picture}(100,12)(0,-5)
    \put(50,-2){\makebox(0,0)[ct]{\small0}}
    \put(0,0){\vector(1,0){102}}
    \put(50,-1){\line(0,1){2}}
    \multiput(34,0)(22,0){3}{\makebox(0,0)[rc]{\big)}}
    \multiput(22,0)(22,0){3}{\makebox(0,0)[lc]{\big(}}
    \multiput(33,0)(22,0){2}{\makebox(0,0)[lc]{\big[}}
    \multiput(45,0)(22,0){2}{\makebox(0,0)[rc]{\big]}}
    \multiput(11,-2)(80,0){2}{\makebox(0,0)[cc]{$\cdots$}}
    \end{picture}
    }
    
    Let $ \beta_1 = (m+1) \theta_{x_1} + \gamma_1 $ and $ \beta_2 = (m+1) \theta_{x_2} + \gamma_2 $. We know that $ \beta_2 - \beta_1 $ is greater than $ \pi $ and less than $ 2\pi $. Therefore, $ \beta_1 $ and $ \beta_2 $ lie in the consecutive intervals, and hence they have different signs. Then, $ 0^{m+1} $ is a string that separates both languages. 
\end{proof}

\section{Our construction for unary PFAs}
\label{sec:our-unary-PFA}

We present our construction for 3-state unary PFAs that can define uncountably many languages with a fixed cutpoint (e.g., $ \frac{1}{2} $). Similarly to binary PFA case, we introduce another parameter $\alpha$ in matrix $ B_x $. Besides, we restrict the interval of $ x $ with $ (0,1/10) $. Let $ \alpha \in (1/2,1] $. The new matrix is defined as 
\[
    B_{x,\alpha} = \mymatrix{ccc}{ 1-\alpha & 1-\alpha &1+x-\alpha  \\ \alpha & \alpha-1 & \alpha+x-1\\
    0 & 1 & 1-2x }.
\]
Even though $ B_{x,\alpha} $ is not a stochastic matrix in general, its eigenvalues are identical to the eigenvalues of $ B_x $: 
\[
    r_1' = 1, ~~~ r_2' = -x + \sqrt{x-x^2} \cdot i, ~\mbox{ and }~ r_3'=-x - \sqrt{x-x^2} \cdot i.
\]
Similarly to the calculations of $ B_x $, we can write $ B_{x,\alpha}^m(3,1) $ as
\[
    a'+(b'+c'\cdot i)r_2^m+(b'-c'\cdot i)r_3^m
\]
for real values $a'$, $b'$, and $c'$. By using the initial conditions
\[
    B_{x,\alpha}^0(3,1)=0, ~~~
    B_{x,\alpha}^1(3,1)=0, ~\mbox{ and }~
    B_{x,\alpha}^2(3,1)=\alpha,
\]
the coefficients can easily be found as 
\[
     a' = \dfrac{\alpha}{3x+1}, ~~~ b' = -\dfrac{\alpha}{6x+2}, ~ \mbox{ and } ~ c' = \alpha \dfrac{x+1}{(6x+2)\sqrt{x-x^2}}.
\]
In other words, $ B_{x,\alpha}^m(3,1) = \alpha B_{x}^m(3,1) $, or equivalently 
\begin{equation}
    \label{eq:Bxalpha}
    B_{x,\alpha}^m(3,1) = \frac{\alpha}{3x+1} + 2 \alpha \sqrt{b^2+c^2}\cdot x^{m/2} \cdot \cos(m\theta_x + \gamma_x ).
\end{equation}
This time, since we restrict $ x \in (0,1/10] $, we have
\begin{equation}
    \label{eq:thetagammainterval-alpha}
    \theta_x \in \left( \frac{9\pi}{18},\frac{11\pi}{18} \right)
    ~\mbox{ and }~
    \gamma_x \in \left( \frac{9\pi}{18},\frac{11\pi}{18} \right).
\end{equation}

Surprisingly, $ B^3_{x,\alpha} $ is a stochastic matrix:
\[
    B'_{x,\alpha} = B^3_{x,\alpha} = \mymatrix{ccc}{
         1-\alpha + \alpha x      & 1-\alpha +\alpha x - 2x^2 &1+\alpha x-\alpha-3x^2+4x^3  \\
        \alpha x    & \alpha x +x-2x^2& x+\alpha x-5x^2+4x^3\\
        \alpha-2 \alpha x  & \alpha-2\alpha x-x+4x^2 & \alpha- 2\alpha x- x+8x^2-8x^3
    }.
\]
Thus, we can define our new unary 3-state PFA, say $Q_{x,\alpha}$, by using $  B'_{x,\alpha} $ as the single transition matrix. The initial state is the first state, and the only accepting state is the third state. Therefore, by Eq.~\ref{eq:Bxalpha},
\[
    f_{Q_{x,\alpha}}(0^m) = \alpha f_{Q_{x}}(0^{3m}) = 
    \frac{\alpha}{3x+1} + 2 \alpha \sqrt{b^2+c^2}\cdot x^{3m/2} \cdot \cos(3m\theta_x + \gamma_x ) ,
\]
where $ m \geq 0 $. By picking $ \alpha = \frac{3x+1}{2} $, we can get $ \frac{\alpha}{3x+1} = \frac{1}{2} $. Let $ Q'_{x} $ be the PFA $ Q_{x,\alpha} $ where $ \alpha = \frac{3x+1}{2} $. Then, we can say that $ f_{Q'_{x}}(0^m) $ is greater than $ \frac{1}{2} $ if and only if $ \cos(3m\theta_x+\gamma_x) >0$.

\begin{theorem}
    For any $ x_1,x_2 \in (0,1/10) $ with $ x_1 < x_2 $, the language recognized by $ Q'_{x_1} $ with cutpoint $\frac{1}{2}$ is different than the language recognized by $ Q'_{x_2} $ with cutpoint $ \frac{1}{2} $.
\end{theorem}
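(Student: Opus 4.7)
The plan is to adapt the argument of Fact~\ref{fact:different-sign} to the fixed-cutpoint setting. By the derivation preceding the theorem, we have $f_{Q'_x}(0^m) > \frac{1}{2}$ if and only if $\cos(3m\theta_x + \gamma_x) > 0$. So to separate the two languages, it suffices to exhibit a nonnegative integer $n$ for which $\cos(3n\theta_{x_1} + \gamma_{x_1})$ and $\cos(3n\theta_{x_2} + \gamma_{x_2})$ have opposite signs; the string $0^n$ will then be accepted above cutpoint by exactly one of the two PFAs.

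First, I would record monotonicity: since $\theta_x = \arccos(-\sqrt{x})$ is strictly increasing in $x$, the assumption $x_1 < x_2$ gives $\theta_{x_1} < \theta_{x_2}$. Combining with the tight windows from Eq.~\ref{eq:thetagammainterval-alpha}, one gets $0 < \theta_{x_2} - \theta_{x_1} < \pi/9$ and $|\gamma_{x_2} - \gamma_{x_1}| < \pi/9$. Therefore the sequence $a_m = 3m(\theta_{x_2}-\theta_{x_1}) + (\gamma_{x_2}-\gamma_{x_1})$ is strictly increasing, with per-step increment $3(\theta_{x_2}-\theta_{x_1}) < \pi/3$, and $a_0 = \gamma_{x_2}-\gamma_{x_1}$ lies in $(-\pi/9, \pi/9)$.

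Next, I would pick $m$ to be the largest nonnegative integer with $a_m \leq \pi$; such $m$ exists because $a_0 < \pi$ and the step is strictly positive. By maximality, $a_{m+1} > \pi$, while the step bound forces $a_{m+1} < \pi + \pi/3 = 4\pi/3$. Setting $\beta_i = 3(m+1)\theta_{x_i} + \gamma_{x_i}$ for $i=1,2$, we obtain $\beta_2 - \beta_1 = a_{m+1} \in (\pi, 4\pi/3)$. Following exactly the geometric picture used in Fact~\ref{fact:different-sign}, the real line is partitioned into length-$\pi$ intervals on which $\cos(\cdot)$ has constant sign and sign alternates between neighbors; a displacement in $(\pi, 4\pi/3)$ is designed to place $\beta_1$ and $\beta_2$ in consecutive such intervals, so that $\cos \beta_1$ and $\cos \beta_2$ have opposite signs, and $0^{m+1}$ separates the two languages.

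The main obstacle I anticipate is the same delicate point that appears already in Fact~\ref{fact:different-sign}: one must certify that $\beta_1$ and $\beta_2$ genuinely land in \emph{sign-adjacent} intervals, rather than intervals two apart. If $\beta_1$ happens to lie within distance $a_{m+1} - \pi$ of the right endpoint of its interval, a shift by $a_{m+1} \in (\pi, 4\pi/3)$ could in principle overshoot the neighboring interval of opposite sign and land in the next one, restoring the original sign. Resolving this would require either tightening the window by choosing $m$ so that $a_{m+1}$ lies in a narrower band just above $\pi$, or locating $\beta_1 \bmod \pi$ away from the boundary by exploiting the freedom in $m$ (in the spirit of the modular analysis done in Section~\ref{sec:our-quantum}). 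Once this edge case is handled, the rest of the argument is a direct transcription of Fact~\ref{fact:different-sign}.
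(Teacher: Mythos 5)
Your proof is essentially the paper's own: take the largest $m$ with $3m(\theta_{x_2}-\theta_{x_1})+\gamma_{x_2}-\gamma_{x_1}\le\pi$, observe that the next term jumps into $(\pi,2\pi)$ (your sharper bound $(\pi,4\pi/3)$ is correct), and invoke the length-$\pi$ sign-partition of the real line to conclude that $\cos\beta'_1$ and $\cos\beta'_2$ have different signs. The ``delicate point'' you flag at the end is, however, a genuine gap --- and it is present, unaddressed, in the paper's proof of this theorem and of Fact~\ref{fact:different-sign} as well. A displacement $\delta\in(\pi,2\pi)$ does \emph{not} by itself place two points in sign-adjacent intervals: if $\beta'_1$ lies within $\delta-\pi$ of the right endpoint of its interval, then $\beta'_2$ lands two intervals over and the signs agree (e.g.\ $\beta'_1=0.45\pi$, $\delta=1.2\pi$ gives $\beta'_2=1.65\pi$, and both cosines are positive). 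Your suggested remedies (narrowing the band above $\pi$, or controlling $\beta'_1 \bmod \pi$) do not obviously go through, since shrinking $\delta-\pi$ only shrinks, but never eliminates, the bad zone.

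Here is one clean way to close the gap. Let $k_i(n)=\lfloor(3n\theta_{x_i}+\gamma_{x_i}+\pi/2)/\pi\rfloor$ for $i=1,2$; with the paper's convention that borderline points count as ``negative,'' the two accepting probabilities lie on opposite sides of $1/2$ exactly when $k_2(n)-k_1(n)$ is odd. By Eq.~\ref{eq:thetagammainterval-alpha}, each increment $3\theta_{x_i}$ lies in $(3\pi/2,11\pi/6)$, so each $k_i$ increases by $1$ or $2$ per step, whence $k_2(n)-k_1(n)$ changes by at most $1$ per step. It starts at $0$ because both $\gamma_{x_i}$ lie in the same length-$\pi$ cell, and it tends to $+\infty$ because $\beta'_2(n)-\beta'_1(n)=3n(\theta_{x_2}-\theta_{x_1})+\gamma_{x_2}-\gamma_{x_1}\to\infty$. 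An integer sequence that starts at $0$, is unbounded above, and moves by at most $1$ per step must take the value $1$; any $n$ at which it does yields the separating string $0^n$. This discrete intermediate-value argument replaces the interval-adjacency claim and makes both your proof and the paper's rigorous.
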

\begin{proof}
    Due to Eq.~\ref{eq:thetagammainterval-alpha}, we know that $ \theta_{x_2} - \theta_{x_1} $ is always less than $ \frac{\pi}{9} $, and hence $ 3(\theta_{x_2} - \theta_{x_1}) $ is less than $ \frac{\pi}{3} $. Moreover, $ 3(\theta_{x_2} - \theta_{x_1}) + \gamma_2 - \gamma_1 $ is less than $ \frac{4\pi}{9} $. Thus, we can say that there exists an integer $ m $ such that
    \begin{equation}
        \label{eq:theta-gamma-diff-alpha}
        3m \mypar{\theta_{x_2}-\theta_{x_1}} + \gamma_{x_2}-\gamma_{x_1} \leq \pi
        ~\mbox{ and }~
        \pi < 3(m+1) \mypar{\theta_{x_2}-\theta_{x_1}} + \gamma_{x_2}-\gamma_{x_1} < 2\pi.
    \end{equation}
    
    Let $ \beta'_1 = 3(m+1) \theta_{x_1} + \gamma_1 $ and $ \beta'_2 = 3(m+1) \theta_{x_2} + \gamma_2 $. We know that $ \beta'_2 - \beta'_1 $ is greater than $ \pi $ and less than $ 2\pi $. Therefore, as explained in Fact~\ref{fact:different-sign}, $ \cos(\beta'_1) $ and $ \cos(\beta'_2) $ have different signs. Then, $ 0^{m+1} $ is the string that separates both languages. 
\end{proof}

\begin{corollary}
    The languages recognized by the family of 3-state unary PFAs $ \{ Q'_{x} \mid x \in (0,1/10) \} $ with cutpoint $ \frac{1}{2} $ form an uncountable set.
\end{corollary}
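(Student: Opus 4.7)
The plan is to deduce this Corollary essentially for free from the preceding Theorem. I would reformulate the Theorem as the statement that the map $\Phi \colon (0,1/10) \to \mathcal{P}(\{0\}^*)$, sending each parameter $x$ to the language $L_x$ recognized by $Q'_{x}$ with cutpoint $\frac{1}{2}$, is injective: for any two distinct $x_1, x_2 \in (0,1/10)$, say with $x_1 < x_2$, the Theorem explicitly produces a witness string $0^{m+1}$ that lies in exactly one of $L_{x_1}$ and $L_{x_2}$, so $L_{x_1} \neq L_{x_2}$.

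Since the domain $(0,1/10)$ is an open real interval and therefore has cardinality $2^{\aleph_0}$, and an injective map cannot shrink cardinality, the image $\Phi\bigl((0,1/10)\bigr) = \{L_x \mid x \in (0,1/10)\}$ must also be uncountable. This image is exactly the family whose cardinality the Corollary asks about, so the argument closes. I do not foresee any genuine obstacle here: the analytic content --- locating an integer $m$ such that $\cos\bigl(3(m+1)\theta_{x_1}+\gamma_{x_1}\bigr)$ and $\cos\bigl(3(m+1)\theta_{x_2}+\gamma_{x_2}\bigr)$ have opposite signs --- is already absorbed into the proof of the Theorem, and the step from pairwise inequality of languages to uncountability of the family is a one-line cardinality observation that requires no further construction.
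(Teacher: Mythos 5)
Your proposal is correct and is exactly the argument the paper intends: the Theorem shows the map $x \mapsto L_x$ is injective on the uncountable set $(0,1/10)$, so the family of languages is uncountable. The paper states the Corollary without proof precisely because this one-line cardinality observation is all that is needed.
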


\acknowledgements
\label{sec:ack}
We thank two anonymous reviewers for their comments and corrections.

%\nocite{*}
%\bibliographystyle{abbrvnat}
% use the following instead if you encounter problems 
\bibliographystyle{alpha}
\bibliography{ref}
\label{sec:biblio}

\end{document}